\newtheorem{theorem}{Theorem}
\newtheorem{definition}{Definition}
\newtheorem{corollary}{Corollary}
\newtheorem{lemma}{Lemma}
\newtheorem{example}{Example}
\newtheorem{remark}{Remark}
\author{Elif Segah Oztas,\\
	\small{Department of Mathematics,}\\
	\small{ Karamanoglu Mehmetbey University, Turkey}\\  \small{esoztas@kmu.edu.tr}}
\begin{document}
\title{Lifted MDS Codes over Finite Fields}
\maketitle
\begin{abstract}

MDS codes are elegant constructions in coding theory and have mode important applications in cryptography, network coding, distributed data storage, communication systems et. In this study, a method is given which MDS codes are lifted to a higher finite field. The presented method satisfies the protection of the distance and creating the MDS code over the $F_q$ by using MDS code over $F_p.$\\
\textit{Keywords:} Lifted MDS codes, distance preserving\\

\end{abstract}
\section{Introduction}

Maximum Distance Separable (MDS) codes \cite{bb} are used across a wide area of modern information technology, cryptography, network coding, data storage etc. In \cite{crp1,crp2,crp3,crp4,crp5,crp6,crp7,crp8,crp9,crp10,crp11,crp12,crp13,crp14},  authors studied on cryptographic approaches. In \cite{nc1,nc3,nc4,nc5,nc6,nc7,nc9,nc10,nc11,nc12,nc13,nc14,nc15,nc16,nc17,nc18,nc19,nc20,nc21,nc22,nc23,nc24,nc25,nc27,nc29,nc30}, MDS codes are studied with network coding.

The main generation method for MDS code is Reed Solomon (RS) codes, especially Generalized Reed Solomon (GRS) codes. In GRS, the code $[n,k,n-k+1]_q$ can obtain where $n \le q$.  
There are some approaches for constructing MDS matrices such that Vandermonde matrix, circulant matrix, Cauchy matrix, Toeplitz matrices etc. \cite{mds1,crp7,crp40,crp41,crp42,crp43,ted}. All of them compute and improve their method over the defined field in the papers. However, calculation complexity increase over the field which has high cardinality for any construction methods for MSD codes, especially in the recursive generating method.

A linear code $C$ with parameters $[n, k, d]_q$ of length $n$ over the finite field $F_q$ where $p$ is a prime and $q$ is a prime power. Any two vectors in $C$ differ in at least $d$
places. Singleton bound is $d \le n-k+1$ and a code satisfying the equality of this bound is called a maximum distance separable (MDS) code.
In this paper, we focus on the extension of existed codes over $F_p$ then we didn't add the MDS code generation method, here. Background on coding theory and related material made be found in \cite{bb}.
In this paper, we give a method to construct MDS codes over $F_q$ ($q=p^t$) by using lift the MDS codes over $F_q$. Moreover, computational complexity is less than other recursive algorithms, and a diversity of the codes are satisfied. These situations give advantages for applications of MDS codes, especially in cryptography

\section{Construction of MDS code over $F_{p^t}$}

In this section, MDS code over $F_p$ ($p$ is a prime) are used to generate MDS codes over $F_{p^t}$ by using distance holder matrix.

\begin{definition}
	Let $M$ is a $n\times n$  diagonal matrix.The entry in the i-th row and j-th column of a matrix $M$ denoted as $m_{ij}$.  $l$ is maximum number of same entries among diagonal entries as follow;
	$$l(M)=\max|\{a : a \in m_{ii} \}|.$$
\end{definition}
For example; $A=\left(\begin{array}{ccc}
	w & 0 &0\\
0 & w^2 &0\\
0 & 0& 1
	
\end{array} \right)
$ and $B=\left(\begin{array}{ccc}
	w & 0 &0\\
	0 & w &0\\
	0 & 0 & 1
\end{array} \right)
$ 
over $F_4$. $l(A)=1$ and $l(B)=2$. In matrix $A$ all diagonal entry has a unique element. In matrix $B$, there are two $w$ and one $1$. Then maximum number of repeated entry in diagonal entries is 2.

\begin{definition}
	Let $M$ be $n\times n$  diagonal matrix  $m_{i,i}\in F_{p^t}^*$ where   $F_{p^t}^*=F_{p^t}-\{0\}$. if $l(M)=1$, it called distance holder matrix. If $l(M)=s$, $M$ is a $s$-distance holder matrix ($s-dh$ matrix).
\end{definition}

The aim of the distance holder matrix is to satisfy the diversity of generators. This means it helps de define different generators for applications of MDS codes. These variations of matrices can be used by the security and multi-node communications systems.

 \begin{lemma}\label{lemmaD}
Let $D$ be an MDS code generation matrix over $F_{p^r}$, then $D'$, obtained by multiplying a row (or column) of $D$ by any element of $F_{p^r}^{*}$, $D$ will also be an MDS code generator matrix.
\end{lemma}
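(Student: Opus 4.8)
The plan is to split the argument according to whether we scale a row or a column, because these two operations affect the code $C$ generated by $D$ in fundamentally different ways. Throughout I would rely on the standard characterization (see \cite{bb}) that a $k\times n$ matrix $D$ generates an $[n,k]$ MDS code over $F_{p^r}$ if and only if every one of its $k\times k$ submatrices is nonsingular, equivalently every $k$ columns are linearly independent. I would take this as the working description of an \emph{MDS code generation matrix} and then show that the property ``every maximal square submatrix has nonzero determinant'' survives each operation.

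The row case is the easy one, and I would dispose of it first. Multiplying the $i$-th row of $D$ by $\lambda \in F_{p^r}^{*}$ is the same as forming $E D$, where $E=\mathrm{diag}(1,\dots,\lambda,\dots,1)$ is the $k\times k$ diagonal matrix with $\lambda$ in position $(i,i)$. Since $\lambda\neq 0$, the matrix $E$ is invertible, so $D'=ED$ has the same row space as $D$; that is, $D'$ generates exactly the same code $C$. Having the same code, $D'$ inherits the same parameters $[n,k,d]$ and is therefore still MDS. One can also see this on the level of minors: each $k\times k$ minor of $D'$ equals $\lambda$ times the corresponding minor of $D$, and $\lambda\neq 0$ keeps it nonzero.

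The column case is where the real content lies, and I expect it to be the main obstacle, because scaling a column genuinely changes the code rather than merely re-presenting it. Writing $D'$ for the matrix obtained by multiplying column $j$ by $\lambda\in F_{p^r}^{*}$, I would argue in two complementary ways. The determinantal way: any $k\times k$ submatrix of $D'$ that avoids column $j$ coincides with a submatrix of $D$ and is therefore nonsingular, while any $k\times k$ submatrix that uses column $j$ has determinant equal to $\lambda$ times the corresponding determinant of $D$; since $\lambda \neq 0$ and all minors of $D$ are nonzero, every minor of $D'$ stays nonzero, so $D'$ generates an MDS code. The coding-theoretic way, which makes the distance preservation transparent, is to observe that the map $\phi\colon (c_1,\dots,c_n)\mapsto (c_1,\dots,\lambda c_j,\dots,c_n)$ is a linear bijection of $F_{p^r}^{n}$ carrying $C$ onto the code generated by $D'$; because $\lambda\neq 0$, a coordinate of $\phi(c)$ vanishes exactly when the corresponding coordinate of $c$ vanishes, so $\phi$ preserves Hamming weight and hence the minimum distance. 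With $n$, $k$ and $d$ all unchanged, the Singleton equality $d=n-k+1$ passes from $C$ to the new code, giving the MDS property.

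Finally, I would remark that iterating these two operations shows more generally that any monomial transformation sends an MDS generator matrix to an MDS generator matrix; the single-row and single-column statements above are the two generating cases, and this is the form in which the result is used when building the lifted codes.
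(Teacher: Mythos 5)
Your proof is correct, but there is nothing in the paper to measure it against: the paper states Lemma \ref{lemmaD} with no proof whatsoever, passing directly to Corollary \ref{coro1}, so your argument supplies what the paper omits rather than paralleling or diverging from it. Both of your routes are sound. The determinantal one rests on the standard characterization (every $k\times k$ submatrix nonsingular if and only if the code is MDS): in the row case every $k\times k$ minor of $D'$ is $\lambda$ times the corresponding minor of $D$, and in the column case each minor is either unchanged or multiplied by $\lambda\neq 0$, so all stay nonzero. The coding-theoretic one is equally clean: row scaling is left multiplication by an invertible matrix and leaves the row space, hence the code itself, unchanged, while column scaling is a Hamming-weight-preserving linear bijection of $F_{p^r}^n$, so $n$, $k$ and $d$ are all preserved and the Singleton equality $d=n-k+1$ transfers. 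Your closing observation, that iterating the two cases handles arbitrary monomial transformations, is precisely the content of Corollary \ref{coro1} (where $M_1 D M_2$ with $M_1,M_2$ nonsingular diagonal is asserted to remain MDS, again without proof in the paper), so your write-up in fact proves both unproved statements at once. The only cosmetic point worth flagging is that you silently correct the lemma's statement: as printed it says ``$D$ will also be an MDS code generator matrix'' where $D'$ is clearly intended, and your reading is the right one.
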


Lemma \ref{lemmaD} is generalized as follows.

\begin{corollary}\label{coro1}
	Let $D$ be an MDS (code generation) matrix, then for any nonsingular diagonal matrices $M_1$ and $M_2$, $M_1 D M_2$ will also be a MDS  matrix.
\end{corollary}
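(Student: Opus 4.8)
The plan is to reduce the two-sided diagonal multiplication to a finite sequence of single-row and single-column rescalings, each of which is exactly the operation covered by Lemma \ref{lemmaD}, and then chain them together. The first observation is purely mechanical: if $M_1=\mathrm{diag}(a_1,\dots,a_n)$ then left-multiplication $M_1 D$ multiplies the $i$-th \emph{row} of $D$ by $a_i$, while if $M_2=\mathrm{diag}(b_1,\dots,b_n)$ then right-multiplication $DM_2$ multiplies the $j$-th \emph{column} by $b_j$. Crucially, a diagonal matrix is nonsingular precisely when every diagonal entry is nonzero, so each $a_i$ and each $b_j$ lies in $F_{p^r}^{*}$, which is exactly the scalar set allowed in the hypothesis of Lemma \ref{lemmaD}.

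Next I would record the elementary factorization that makes the reduction rigorous. Writing $E_i(a)$ for the diagonal matrix agreeing with the identity except that its $(i,i)$ entry is $a$, one has
$$M_1 = E_1(a_1)\,E_2(a_2)\cdots E_n(a_n),$$
and similarly for $M_2$. Multiplying $D$ on the left by a single factor $E_i(a_i)$ scales only the $i$-th row by $a_i\in F_{p^r}^{*}$, so by Lemma \ref{lemmaD} the result remains an MDS generator matrix.

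With this factorization in hand, the proof closes by a short induction. Starting from the MDS matrix $D$, I apply the factors $E_i(a_i)$ one at a time; at each step the current matrix is MDS by the inductive hypothesis, and Lemma \ref{lemmaD} guarantees that rescaling one further row preserves the MDS property, so after $n$ steps $M_1 D$ is MDS. An identical argument, now using the column form of Lemma \ref{lemmaD}, shows that multiplying $M_1 D$ on the right by $M_2$ preserves the MDS property, yielding that $M_1 D M_2$ is MDS. I do not expect a genuine obstacle here, since Lemma \ref{lemmaD} already supplies the single-operation case; the only point requiring care is the bookkeeping that the composite operation is the literal product $M_1 D M_2$ and that every scalar invoked is a unit, so that each invocation of the lemma is legitimate.
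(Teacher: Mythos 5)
Your proof is correct and follows exactly the route the paper intends: the paper offers no written proof of Corollary \ref{coro1}, presenting it merely as ``Lemma \ref{lemmaD} generalized,'' and your factorization of $M_1$ and $M_2$ into elementary diagonal matrices followed by induction on single row and column rescalings is precisely that implicit argument made rigorous. The only cosmetic point is that when $D$ is a $k\times n$ generator matrix rather than a square matrix, $M_1$ should carry $k$ diagonal entries rather than $n$; this bookkeeping detail does not affect the argument.
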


$"\cdot_{F_{p^t}}"$ denote that matrix product operations is over $F_{p^t}$,. 

\begin{definition}
	Let $G$ be a generator matrix of  $C$ that is $[n,k,d]$ $MDS$ code over $F_p$. 
$G\cdot_{F_{p^t}} M$ is generator of $C'$ that is a Lifted MDS code of $C$ over $F_p^t$  
where $M$ is a $n \times n$ $dh$ matrix over $F_{p^t}$ and  $p^t > n$. 
\end{definition}

In the following theorem, codes are lifted to an upper field under some restriction. Then, new codes protect the distance and they are still MDS code over the upper field.

\begin{theorem}\label{teoM}
	Let $C$ be a $[n,k,d]_p$ MDS code over $F_p$. Lifted MDS code of $C$  is $C'$ has parameter $[n,k,d]_{q}$.
\end{theorem}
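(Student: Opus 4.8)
The plan is to reduce the statement to the minor characterization of MDS codes together with the invariance of that characterization under field extension. Recall that a code of length $n$ and dimension $k$ with generator matrix $G$ is MDS if and only if every $k\times k$ submatrix of $G$, obtained by selecting $k$ of its $n$ columns, is nonsingular; equivalently, every maximal minor of $G$ is nonzero. So the first thing I would record is that, since $C$ is $[n,k,d]_p$ MDS with generator $G$, every such $k\times k$ minor of $G$ is a nonzero element of $F_p$, and by the Singleton bound $d=n-k+1$.

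The crucial step, and the one I expect to carry the real content, is that this property survives the passage from $F_p$ to $F_q$. Because $F_p$ is a subfield of $F_q=F_{p^t}$, each $k\times k$ minor of $G$ is literally the same field element whether it is computed in $F_p$ or in $F_q$, and an element that is nonzero in $F_p$ remains nonzero in $F_q$. Hence, viewed as a matrix over $F_q$, $G$ still has all maximal minors nonzero and therefore generates an MDS code over $F_q$ of length $n$; the same reasoning shows $\dim$ is preserved, since a nonvanishing $k\times k$ minor keeps $\mathrm{rank}(G)=k$ over $F_q$. The apparent danger here is that enlarging the field might create new low-weight codewords, but that is exactly what the minor argument rules out: for $u\neq 0$ a codeword $uG$ of weight at most $n-k$ would vanish on at least $k$ coordinates, forcing $u$ into the kernel of the nonsingular $k\times k$ submatrix supported on those coordinates, a contradiction; this nonsingularity is field-independent, so $\mathrm{wt}(uG)\ge n-k+1$ over $F_q$ as well.

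With $G$ established as an MDS matrix over $F_q$, I would invoke Corollary \ref{coro1} with the nonsingular diagonal matrices $M_1=I_k$ and $M_2=M$, where $M$ is the $dh$ matrix whose diagonal entries lie in $F_q^{*}$ and which is therefore nonsingular. This gives that $G'=G\cdot_{F_q} M$ is again an MDS matrix over $F_q$; concretely, right multiplication by $M$ merely rescales each column by a nonzero scalar, so it preserves the support, and hence the Hamming weight, of every codeword, leaving the minimum distance unchanged while also preserving rank. Assembling the parameters, $C'$ has length $n$, dimension $k$, and, being MDS over $F_q$, minimum distance $n-k+1=d$, so $C'$ is $[n,k,d]_q$ as claimed. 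I would finally remark that the hypothesis $p^t>n$ is needed only to guarantee the existence of a $dh$ matrix with $n$ distinct diagonal entries in $F_q^{*}$, and plays no role in the distance argument itself.
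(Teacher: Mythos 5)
Your proof is correct, but it takes a genuinely different---and substantially more rigorous---route than the paper's. The paper's own proof is informal: it asserts that the distance could only change through the column entries of the generator matrix, that over a prime field there is ``no polynomial identification for elements,'' and that the extension to $F_{p^t}$ (same characteristic, $p^t>n$) therefore ``protects'' the distance; it never exhibits a mechanism ruling out new low-weight codewords over the larger field, and the multiplication by the $dh$ matrix is handled only implicitly through Lemma \ref{lemmaD} and Corollary \ref{coro1}. You instead use the minor characterization of MDS codes (every $k\times k$ submatrix of $G$ nonsingular), note that a determinant of a matrix with entries in $F_p$ is the same field element whether computed in $F_p$ or in $F_q\supseteq F_p$, so nonvanishing of all maximal minors---hence the rank and the MDS property---transfers verbatim to $F_q$, and you make the only possible failure mode explicit: a nonzero $u$ with $\mathrm{wt}(uG)\le n-k$ would lie in the kernel of a nonsingular $k\times k$ submatrix, which is impossible. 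The scaling by the $dh$ matrix is then dispatched by Corollary \ref{coro1} with $M_1=I_k$, $M_2=M$, or equivalently by the observation that column scaling by units preserves supports of codewords. Your route buys three things the paper's does not: a verifiable argument where the paper offers only a heuristic about characteristic; greater generality, since field-independence of minors works for any extension $F_{p^r}\subseteq F_{p^{rt}}$ while the paper's wording leans on $p$ being prime; and the correct isolation of the hypothesis $p^t>n$ as needed solely for the existence of a $dh$ matrix with $n$ distinct nonzero diagonal entries (i.e., for diversity), not for distance preservation.
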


\begin{proof}
	 Let $C$ be a $[n,k,d]$ MDS code over $F_p$. $d$ has been changed by changing column entries of the generator matrix $G$. In column case, changing the distance is a connected characteristic of $F_p$. Because $p$ is prime, there is no polynomial identification for elements. Then there is no restriction for $d$ except characteristic. Then, operation in field extension to   $F_{p^t}$ that same characteristic as $F_p$ and $p^t > n$  is protect the distance at least $d$.
\end{proof}

By Lemma \ref{lemmaD} and Corollary \ref{coro1}, in same field, the matrix and element operation preserve MDS property. In Theorem \ref{teoM} we satisfy this preservation to the upper finite field by the characteristic of the field.

\begin{remark}{\label{remk1}}
	Diversity of generator matrix are satisfied by using finite field $F_{p^t}$ ($p^t > n$) in Theorem \ref{teoM}. Then number of different generator matrices is $\binom{p^t-1}{n}$. 
\end{remark}

\begin{example}{\label{ex1}}
	Let $G$ be a generator matrix of code $C$  over $F_7$.
	$$G=\left[\begin{array}{cccccccc}
1& 0& 0& 6 &4 &2 &5 &3\\
0 &1 &0 &3 &1 &5 &1 &3\\
0 &0 &1 &3 &5 &2 &4 &6
\end{array}\right]$$
	$C$ is a $[8,3,6]$ MDS code.
	
	Let $M$ be a matrix $dh$-matrix $M=diag(w^{244}, w^{28}, w^{326}, w^{294}, w^{239}, w^{76}, w^{212}, w^{84} )$ over $F_{7^3}.$
	$$G'=G\cdot_{F_{p^t}} M=$$$$=
	\left[\begin{array}{cccccccc}
	    1 &    0  &   0 &w^{221} &w^{223}& w^{288}& w^{253}& w^{239}\\
	    0     &1    & 0 &w^{323} &w^{211} &w^{333} &w^{184} &w^{113}\\
	    0   &  0  &   1 & w^{25} &w^{198} &w^{206}  &   2 &w^{271}	
	\end{array}\right] $$
$G'$ generate a $[8,3,6]$ MDS code over $F_{7^3}$. 

Another example for same code over $F_7$:

Let $M$ be a matrix $dh$-matrix $M=diag( w^{108}, w^{191}, w^{261}, w^{312}, w^{95}, w^{249}, w^{278}, w^{47}  )$ over $F_{7^3}.$

	$$G''=G\cdot_{F_{p^t}} M=$$$$=
\left[\begin{array}{cccccccc}
   1   &  0 &    0&  w^{33}& w^{215} &w^{255}& w^{113}& w^{338}\\
    0  &   1 &    0& w^{178}& w^{246}  &   w&  w^{87}& w^{255}\\
    0  &   0  &   1& w^{108}& w^{119}& w^{102}& w^{245}& w^{299}\\
\end{array}\right] $$
$G''$ generate a $[8,3,6]$ MDS code over $F_{7^3}$.
\end{example}

By Example \ref{ex1}, Distance has been preserved. Moreover, diversity for components of the codes and a MDS code over $F_{7^3} $ have been obtained. By Remark \ref{remk1}, lots of different codes that have the same distance can be generated. This situation has importance in security and communication systems.

\section{Conclusion}

In this paper, we give a method called lifted MDS codes. It satisfies that protection the distance, variation of code components, keep the MDS property in higher finite fields. Moreover, complexity for calculation is less than the previous recursive method which is clear, because there are only matrix multiplications for the generation of new MDS code.

\end{document}